%% file: sample-acmcp.tex
\documentclass[sigconf]{acmart}
\usepackage{algorithm}
\usepackage{algorithmic}
\usepackage{xcolor}

\usepackage{tabularx} 
\usepackage{listings}
\usepackage{subcaption}

\setcopyright{none}

\AtBeginDocument{%
  }

\begin{document}

\title{Enhancing LLM-Based Test Generation by Eliminating Covered Code}



\author{Weizhe Xu}
\affiliation{%
  \institution{University of Notre Dame}
  \city{Notre Dame, IN}
  \country{USA}
}
\email{wxu3@nd.edu}

\author{Mengyu Liu}
\affiliation{%
  \institution{Washington State University}
  \city{Richland, WA}
  \country{USA}
}
\email{mengyu.liu@wsu.edu}

\author{Fanxin Kong}
\affiliation{%
  \institution{University of Notre Dame}
  \city{Notre Dame, IN}
  \country{USA}
}
\email{fkong@nd.edu}

\input{sections/0_abstract}

\maketitle

\input{sections/1_introduction}
\input{sections/2_relatedwork}
\input{sections/4_approach}
\input{sections/5_experiments}
\input{sections/7_conclusion}


\bibliographystyle{ACM-Reference-Format}
\bibliography{sample-base}

\newpage

\input{sections/8_appendix}

\end{document}

%% file: sections/0_abstract.tex
\begin{abstract}
Automated test generation is essential for software quality assurance, with coverage rate serving as a key metric to ensure thorough testing.
Recent advancements in Large Language Models (LLMs) have shown promise in improving test generation, particularly in achieving higher coverage.
However, while existing LLM-based test generation solutions perform well on small, isolated code snippets, they struggle when applied to complex methods under test.
To address these issues, we propose a scalable LLM-based unit test generation method. 
Our approach consists of two key steps.
The first step is context information retrieval, which uses both LLMs and static analysis to gather relevant contextual information associated with the complex methods under test.
The second step, iterative test generation with code elimination, repeatedly generates unit tests for the code slice, tracks the achieved coverage, and selectively removes code segments that have already been covered.
This process simplifies the testing task and mitigates issues arising from token limits or reduced reasoning effectiveness associated with excessively long contexts.
Through comprehensive evaluations on open-source projects, our approach outperforms state-of-the-art LLM-based and search-based methods, demonstrating its effectiveness in achieving high coverage on complex methods.
\end{abstract}

\keywords{Unit Test Generation, Large Language Models, Coverage}

%% file: sections/1_introduction.tex
\section{Introduction}

Achieving high code coverage in automated test generation is a critical yet challenging problem in software quality assurance, as insufficient coverage often leaves bugs undetected.
Traditional approaches—such as search-based~\cite{fraser2011evosuite, lukasczyk2022pynguin}, constraint-based~\cite{cadar2008klee, godefroid2005dart}, and random-based~\cite{pacheco2007randoop, 
maciver2019hypothesis} methods—frequently fall short when facing complex methods under test, as their heuristic explorations cannot efficiently handle the vast and intricate execution spaces.

Recently, large language models (LLMs) have presented new opportunities for improving the coverage rate of unit test generation.
Studies such as ChatTester~\cite{yuan2023no} and ChatUniTest~\cite{chen2024chatunitest} have evaluated LLMs, such as GPT-3.5, demonstrating their ability to achieve a higher coverage rate compared to conventional techniques like Evosuite~\cite{fraser2011evosuite}. 
However, existing LLM-based test generation methods work well on simple code snippets but struggle with complex functions (cyclomatic complexity~\cite{mccabe1976complexity} $>$ 10), which pose moderate or higher risk~\cite{sanusi2020development, yang2024enhancing}. 
Leveraging LLMs to generate high-coverage test code for complex methods is crucial for moving LLM-driven testing from toy examples to practical, production-level applications.


This is a non-trivial task due to two challenging issues.
First, current LLMs have limited reasoning capabilities.
Their performance declines on complex methods involving deeply nested conditions and multiple execution paths, often resulting in low coverage.
Second, prompt efficiency is a major concern.
LLMs face strict token limits (e.g., 16K tokens in GPT-4o), making it difficult to include all relevant context for complex methods.
Consequently, enhancing prompt efficiency is essential to avoid failures due to token overflow.
It is also important for improving the effectiveness of the LLM.
While some approaches~\cite{yuan2023no, chen2024chatunitest} try to pack more context into prompts, excessive or irrelevant information can introduce noise and hinder reasoning~\cite{shi2023large, zhou2024can, jin2025end}.
Multi-turn prompting may help to increase the reasoning capabilities, but it increases redundancy and token usage. 
Efficient prompting is thus critical to reduce noise, stay within limits, and improve output quality.

To address these challenges, we propose an LLM-based unit test generation framework for Python that incrementally eliminates already-covered code using a divide-and-conquer strategy.
Our method consists of two key steps.
(1) \textbf{Context information retrieval}: we collect and summarize all relevant code dependencies for the target unit, including external packages, helper functions, and method definitions.
(2) \textbf{Iterative test generation with code elimination}: the LLM repeatedly generates and refines test cases while the framework systematically removes code segments that have already been covered, ensuring that such removals do not alter the execution paths of any uncovered lines. After each iteration, this pruning produces a simplified program slice that serves for the next round of test generation.
By progressively reducing redundant code, our method keeps the LLM's attention focused on the remaining uncovered parts, narrows the reasoning scope, simplifies the test generation task, and mitigates performance degradation caused by irrelevant or redundant context.

The main contributions of our study are as follows.
\begin{itemize}
    \item To improve coverage performance on complex methods, we propose a fully automated LLM-based unit test generation approach tailored for Python.
    \item The approach employs an iterative unit test generation approach that enhances both efficiency and effectiveness by systematically eliminating already covered code.
    \item We conduct a comprehensive evaluation across multiple open-source projects using three LLM-based unit test generation method and one classical search-based method, demonstrating the effectiveness of our method in comparison to three state-of-the-art LLM-based approaches.
\end{itemize}


%% file: sections/2_relatedwork.tex
\section{Related Works}~\label{related}
In this section, we provide a brief overview of unit test generation methods.
A ``unit" typically refers to a function, method, or procedure within the code.
Unit test generation aims to automatically create test cases that validate the correctness of software by executing different code paths. 
Developers hope the generated unit tests can cover as many branches and lines as possible.

\subsection{Conventional Unit Test Generation Methods}
Researchers have explored various approaches to achieve high coverage, including search-based software testing (SBST)~\cite{lukasczyk2022pynguin, fraser2011evosuite}, symbolic execution~\cite{chen2013state, 10.1145/1064978.1065036}, and deep-learning-based test generation~\cite{tufano2020unit}.
SBST methods employ strategies like evolutionary algorithms in Evosuite~\cite{fraser2011evosuite} and coverage-directed random testing in Randoop~\cite{pacheco2007randoop}. 
However, their effectiveness is limited in complex software due to the vast search space~\cite{mcminn2011search}.
Symbolic execution tools, such as Dart~\cite{10.1145/1064978.1065036}, systematically explore feasible execution paths but suffer from scalability issues due to path explosion~\cite{xie2009fitness}.
Deep learning models generate test cases directly from code and offer scalability by leveraging human-written tests across diverse datasets~\cite{tufano2020unit}. 
While they produce varied inputs beyond SBST’s capabilities, they often generate code that is non-compliant or fails to execute~\cite{yang2022survey}.

\subsection{LLM-based Unit Test Generation Methods}
In recent years, LLMs have demonstrated their potential to effectively handle code-related tasks, such as code generation, code repair, etc.
Some works have already tried to leverage LLMs for unit test generation.
ChatTester~\cite{yuan2023no} and ChatUniTest~\cite{chen2024chatunitest} show that the LLM can outperform the SBST methods.
CODAMOSA~\cite{lemieux2023codamosa} integrates LLMs with SBST methods, leveraging LLMs to assist when SBST methods can no longer effectively explore further.
TELPA~\cite{yang2024enhancing} utilizes carefully designed prompts to assist LLMs in generating unit test cases that cover hard-to-cover branches.
SymPrompt~\cite{ryan2024code}, drawing inspiration from symbolic execution, leverages execution paths as structural guides for the LLM.

Although recent methods~\cite{yuan2023no, chen2024chatunitest, lemieux2023codamosa, pizzorno2024coverup} improve LLM-based test generation by providing more context, they overlook the scalability challenge for complex methods.
Scalability enlarges the analysis scope, introduces noise, and strains LLMs’ limited generation capacity, while also risking token overflow in multi-turn interactions.
Consequently, existing approaches still struggle to achieve satisfactory coverage for complex real-world methods.
HITS~\cite{wang2024hits} attempts to address this issue by splitting the method under test.
However, this approach relies on the LLM to perform a slicing.
Due to hallucination, it may miss important cross-slice information or introduce irrelevant or even new code generated by the LLM itself.
In addition, when dealing with larger code under test, a single round of splitting may not be sufficient.

In contrast, our approach fundamentally addresses this issue by eliminating already covered non-essential code.
Our method is based on static analysis and removes only the lines that are irrelevant to the execution of the uncovered lines, thereby avoiding information loss during cutting. 
Moreover, our approach performs multiple rounds of elimination for each code slice. 
After each elimination process, the LLM’s conversation history is cleared to minimize the impact of multi-turn interactions.

%% file: sections/4_approach.tex
\section{Methodology}~\label{method}
In this section, we present an overview of our proposed framework, as illustrated in Fig.~\ref{fig:overview}.
\begin{figure*}[htbp]
    \centering
    \includegraphics[width=\textwidth]{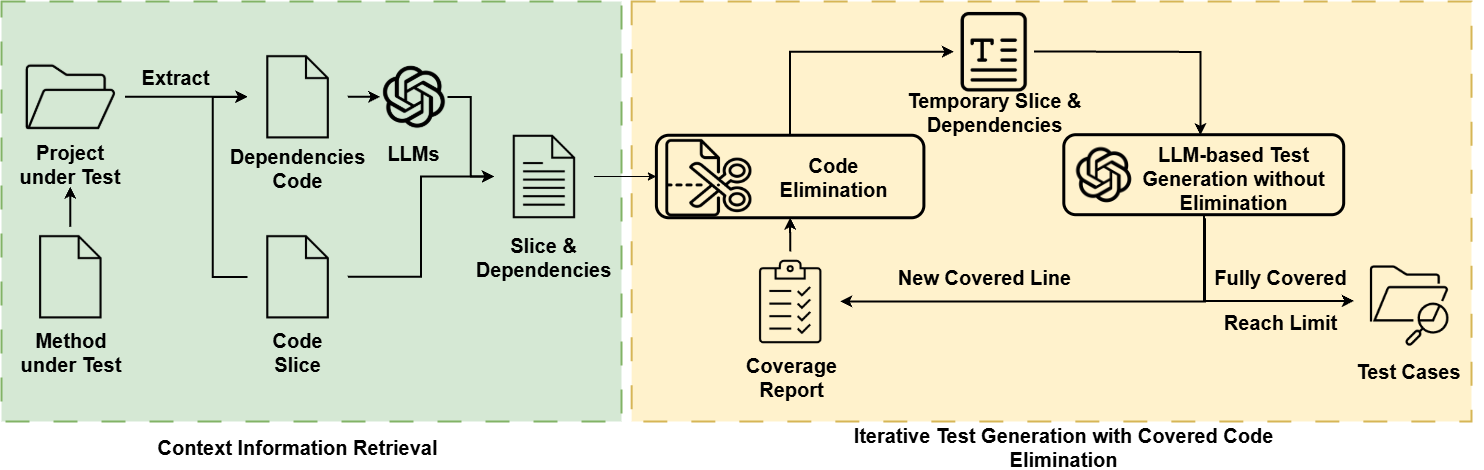}
    \caption{The Overview of Our Framework. Our approach consists of two steps: context information retrieval and iterative test generation with covered code elimination, which are illustrated by the green and orange dashed boxes in the figure, respectively.}
    \label{fig:overview}
\end{figure*}
The framework aims to generate high-coverage unit test cases for Python projects through two main steps: \textbf{context information retrieval} and \textbf{iterative test generation with covered code elimination}.



\subsection{Step I: Context Information Retrieval}
The context information retrieval step extracts relevant dependencies and summarizes them to support unit test generation for a complex target method under test, as shown in the green dashed box in Fig.~\ref{fig:overview}.
Given a target method, we first identify its internal dependencies, whose definitions are located in the same source file. 
This is done by analyzing intra-file function calls and internal variable references using static analysis tools such as the Abstract Syntax Tree (AST)~\cite{sun2023abstract}.
The definitions of these internal dependencies, together with the code of the target method itself, collectively form a basic code slice.
Next, we identify information about external dependencies by analyzing cross-module function calls and external variable references. 
For dependencies defined within the project, we collect their corresponding definition code to form a dependencies code file, similar to~\cite{wang2024hits, chen2024chatunitest, yang2024enhancing}

However, this dependency code file is not suitable for direct input to the LLM, as it often contains excessive irrelevant information and is typically much larger than the code slice itself. 
Including such content in the prompt as auxiliary information for unit test generation may consume a significant number of tokens and introduce substantial noise.
To ensure prompt efficiency and reduce noise, our method utilizes LLMs to summarize the dependency code file. 
Our method applies the one-shot prompting strategy to summarize the definition of each function collected in the dependencies code file.
Using a manually constructed example, we guide the LLM to summarize each function into its signature and a high-level description of its underlying logic.

The summarized dependencies, combined with the code slice, form the prefix for subsequent unit test generation.
This step is essential, as the content of local modules is typically project-specific and not encountered during LLM training, making it difficult for the model to infer behavior based solely on module names. 
Third-party libraries such as \textit{numpy} or \textit{pandas}, which LLMs are generally familiar with during the training process, are excluded from summarization. 
By focusing only on strongly relevant and compressed contextual information, this step helps reduce token overhead, minimize noise, and improve the accuracy and quality of the generated unit tests.

\subsection{Step II: Iterative Test Generation with Covered Code Elimination}
This is the core step, which takes the slice \& dependencies file obtained from the previous context information retrieval step as input and leverages the LLM to generate unit tests with high coverage rate.
This step can be divided into two components.
One is \textbf{LLM-based test generation for a code slice without elimination}, and the other is \textbf{code elimination}.

As shown in the yellow dashed box in Fig.~\ref{fig:overview}, the slice \& dependencies file is first sent to the code elimination component.
At this time, the coverage report contains no covered lines within the target unit. 
Therefore, the code elimination component does not perform any elimination, and the temporary slice \& dependencies file remains identical to the original slice \& dependencies file. 
This file is then passed to the LLM-based test generation without the elimination component to generate unit tests. 
If all lines in the target unit are covered, the resulting tests are added to the final test suite. 
If there are still uncovered lines, a coverage report is produced and forwarded to the code elimination component. 
This component eliminates portions of the code slice that are irrelevant to the uncovered lines, while aiming to minimize the impact on the execution of the uncovered lines.
Then a new temporary slice \& dependencies file is created. 
This process repeats until either all lines in the target method are covered or the LLM-based test generation without the elimination component can no longer generate tests that cover any of the remaining uncovered lines.

\begin{figure}[H]  
    \centering
    \includegraphics[width=\columnwidth]{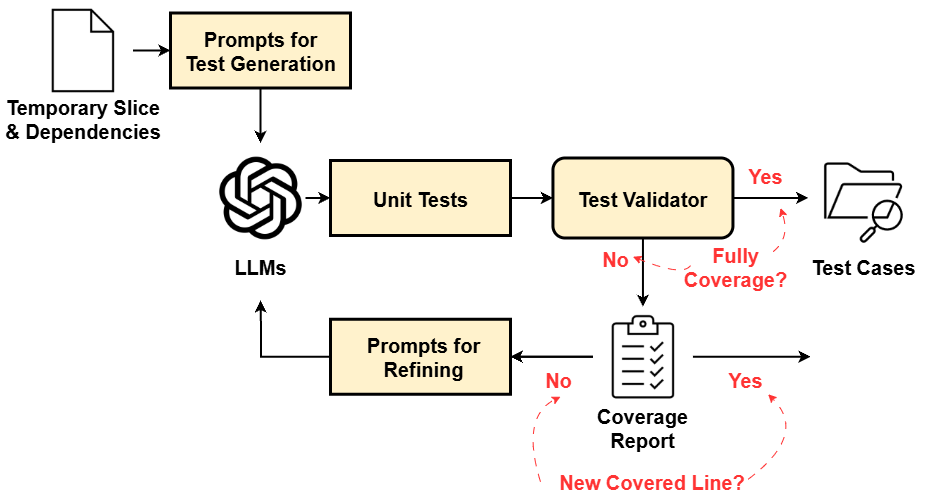} 
    \caption{Internal Structure of LLM-based Test Generation without Elimination Component.}
    \vspace{-2mm}
    \label{fig:iteration}
\end{figure}

\subsubsection{LLM-based Test generation without Elimination.}
This component receives the temporary slice \& dependencies file from the code elimination component and outputs the generated test cases along with the corresponding coverage report.
The detailed internal structure is illustrated in Fig.~\ref{fig:iteration}.

In the beginning, the temporary slice \& dependencies files are used to construct a prompt.
The format of the prompt is shown as Fig.~\ref{fig:format}.
In this format, the section enclosed in \{\{$\cdot$\}\} represents placeholders that are dynamically filled based on the code under test when generating each prompt.
For example, \{\{$code\_under\_test$\}\} will be replaced with the specific code slice.
\begin{figure}[h]  
    \centering
    \includegraphics[width=\columnwidth]{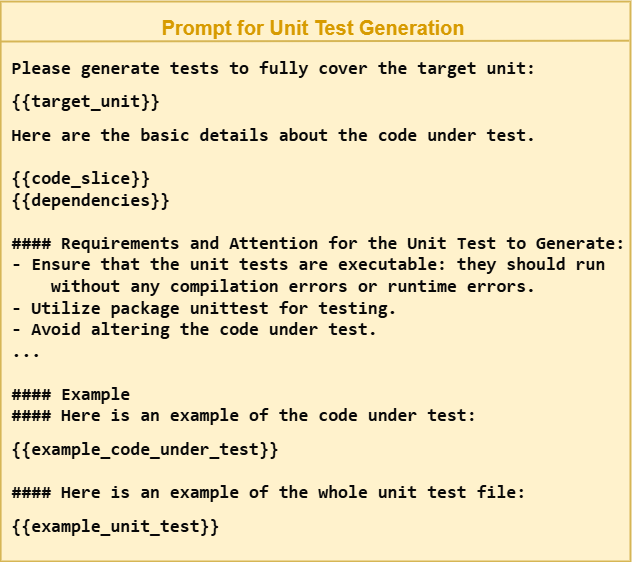} 
    \caption{The Format of the Prompt. \{\{$\cdot$\}\} serves as a placeholder, which will be replaced by variables during actual use.}
    \label{fig:format}
\end{figure}

We use this prompt to guide the LLM in generating unit tests, which are then evaluated by a test validator for coverage.
All previously generated tests are executed, and coverage is measured against the original Python file.
If full coverage is achieved, the validator outputs the complete test set; otherwise, it checks for newly covered lines.
If coverage improves, the updated report is forwarded to the code elimination component; if not, refinement prompts are generated to guide further test improvement.
Runtime errors from failed tests are also collected and incorporated into refinement prompts.
This process repeats until reaching a predefined iteration limit.

The overall process of LLM-based test generation without elimination can be represented as Algorithm.~\ref{alg:alg2}.
The return values 1, 0, and -1, respectively, indicate the following conditions: the target slice has been fully covered and the process terminates; new lines have been covered, processing another round of code elimination; and the iteration limit has been exceeded, leading to termination.
In Lines 1 and 2, the algorithm constructs a new prompt and starts a new dialogue with LLM.
Starting from Line 4, the algorithm periodically prompts the LLM to generate tests and validate them, and takes different actions based on the validation results.

\begin{algorithm}
\caption{LLM-based Tests Generation without Elimination}
\label{alg:alg2}
\begin{algorithmic}[1]
\REQUIRE $slice$, $limit$, $tests$, $uncov$
\ENSURE $tests$, $uncov$, $flag$
\STATE $prompt \leftarrow prompt\_constructor(slice, uncov)$
\STATE $llm \leftarrow$ new $LLM\_dialogue()$
\STATE $iteration \leftarrow 0$
\WHILE{$iteration \leq limit$}
    \STATE $test \leftarrow llm(prompt)$
    \STATE $tests.add(test)$
    \STATE $new\_uncov \leftarrow validator(tests)$
    \IF{$new\_uncov = [~]$}
        \RETURN $tests$, $new\_uncov$, $1$
    \ELSIF{$len(new\_uncov) < len(uncov)$}
        \RETURN $tests$, $new\_uncov$, $0$
    \ELSE
        \STATE $prompt \leftarrow refining(new\_uncov)$
    \ENDIF
    \STATE $uncov \leftarrow new\_uncov$
    \STATE $iteration \leftarrow iteration + 1$
\ENDWHILE
\RETURN $tests$, $uncov$, $-1$
\end{algorithmic}
\end{algorithm}

Through this iterative LLM invocation process, we maximize the utilization of the LLM's limited code-related capabilities to generate test cases for a slice \& dependencies file.
There are various ways to construct such prompts~\cite{yang2024enhancing, ryan2024code}, but prompt engineering is not our focus, and existing methods can be readily integrated into our framework.

\subsubsection{Code Elimination.}
This component takes the coverage analysis report, along with the original code slice \& dependencies file, as input and generates a new temporary slice \& dependencies file.
It systematically removes as many already covered lines as possible while preserving the original execution behavior, essentially decomposing a complex problem into simpler subproblems.

Specifically, all the uncovered liens need to be preserved.
Then, we construct a fine-grained control flow graph (CFG) of the original target method based on the AST.
It is a is a directed graph that represents all possible statement execution paths through a program, where nodes correspond to statements and edges indicate control flow between them.
From the perspective of the set of execution paths, our goal is to eliminate all execution paths that do not contain any uncovered line.
For each uncovered line, we perform a bidirectional breath-first search (BFS) on the fine-grained CFG.
All nodes reached during these traversals correspond to statements that must be preserved, as they represent the statements required by all execution paths that include the uncovered line.
Finally, it is also necessary to refine some retained statements. 
For example, in an $if...else...$ structure, if the $if$ branch becomes empty after pruning, the statements in the $else$ branch should be transformed into an $if~not$ condition to ensure syntactic correctness and preserve program logic.


\begin{algorithm}
\caption{Code Elimination}
\label{alg:alg1}
\begin{algorithmic}[1]
\REQUIRE $uncov$, $target$
\ENSURE $new\_target$
\STATE $preserve \leftarrow uncov$
\STATE $cfg \leftarrow construct\_CFG(target)$
\FOR{each $line$ in $uncov$}
    \STATE $necessities \leftarrow bidirect\_BFS(line, cfg)$
    \STATE $preserve.add(necessities)$
\ENDFOR
\STATE $new\_target \leftarrow fix(preserve, target)$
\RETURN $new\_target$
\end{algorithmic}
\end{algorithm}

The overall process of code elimination can be represented as Algorithm.~\ref{alg:alg1}.
The algorithm takes as input the set of uncovered lines ($uncov$) and the original target unit ($target$), and returns a refined target unit $new\_target$ with irrelevant code eliminated.
In Line 1, all uncovered lines are initially added to the set $preserve$, which indicates the lines that must be preserved.
In line 2, the algorithm uses the function $construct\_CFG(\cdot)$ to construct a fine-grained CFG of the original target method.
Lines 3-6 iterate over each uncovered line.
For each such line, bidirectional BFS is performed on the CFG to identify all its necessary lines that are required for execution.
These necessary lines are then added to the $preserve$ set.
In Line 7, the function $fix(\cdot)$ is invoked to reconstruct the new target unit by retaining only the lines in $preserve$ set.



\begin{proposition}[Behavior Preservation]
Let $P$ be the original target unit, $U$ the set of uncovered lines, and $P'$ the 
program produced by Algorithm~1. Assume that 
(1) the fine-grained CFG soundly captures all execution paths of $P$; 
(2) all data dependencies of lines in $U$ lie on execution paths leading to them; and 
(3) the reconstruction step performs only semantics-preserving local rewrites. 
Then, for any input on which $P$ reaches a line $l \in U$, the transformed program $P'$ 
reaches the corresponding line and observes the \emph{same values for all variables used at $l$} as in $P$.
\end{proposition}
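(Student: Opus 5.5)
The plan is a path-simulation argument. Fix an input $x$ on which $P$ reaches some $l \in U$, and let $\pi = s_0 s_1 \cdots s_k$ with $s_k = l$ be the prefix of the execution trace of $P$ on $x$ up to the first visit of $l$. By assumption (1), $\pi$ is a path in the fine-grained CFG from the entry node to $l$. The backward part of the bidirectional BFS started at $l$ visits every node from which $l$ is reachable, so every $s_i$ lies in $preserve$. Likewise, the forward part preserves every node reachable from $l$. Hence the whole trace prefix consists of retained statements, and the projection of $\pi$ onto $P'$ is $\pi$ itself: no statement executed before reaching $l$ is deleted.

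Next I show, by induction on $i$, that $P'$ on input $x$ executes the corresponding statements $s'_0,\dots,s'_i$ in the same order, with identical program states restricted to the variables defined along $\pi$. The base case is the entry, where both programs start from the same input state. For the step, each $s_i$ is retained, either verbatim or via a local rewrite from the $fix(\cdot)$ step. By assumption (3), a rewrite such as turning an empty \emph{if}-branch with a nonempty \emph{else} into \emph{if not} has the same effect on the state and the same control-flow decision whenever it is evaluated. So $s'_i$ produces the same state update and transfers control to $s'_{i+1}$ (the successor $s_{i+1}$ is retained, so the branch target exists in $P'$). Deleted statements lie off every entry-to-$l$ path. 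Since every statement is either retained or deleted, any deleted statement that still lay on a path reaching $l$ would contradict the BFS closure; therefore deletions cannot intervene between $s'_i$ and $s'_{i+1}$.

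Finally, the values observed at $l$ are determined by the last definitions of the variables used at $l$. By assumption (2), these definitions lie on execution paths leading to $l$, in particular on $\pi$ itself, since reaching definitions for this run come from this run's trace. They were executed identically in $P'$ by the induction, so every variable used at $l$ has the same value. Repeating the argument for each $l \in U$ (the $preserve$ set is the union over all such lines) gives the claim for all uncovered lines simultaneously.

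The main obstacle I expect is the inductive step's claim that control flow in $P'$ follows $\pi$ exactly. Removing statements can change the syntactic structure, for example by emptying a loop body, removing a \texttt{break} or \texttt{return}, or deleting an exception handler. I will argue that any statement whose removal would redirect control is itself a CFG node on some path to $l$, or a predicate with $l$ reachable from it, and hence is preserved by the backward BFS. I will also lean explicitly on assumption (3) to cover the cases where reconstruction must restructure blocks. If this cannot be made airtight for exceptions, I would refine assumption (1) to state that exceptional edges are included in the CFG.
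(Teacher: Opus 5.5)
Your proposal is correct and takes essentially the same route as the paper's proof. The bidirectional BFS keeps every statement that lies on a CFG path through $l$, so nothing executed before $l$ is deleted, and assumption~(3) then lets $P'$ replay the retained statements in the same order, giving the same values at $l$; your induction over the trace prefix $\pi$ makes explicit the paper's one-line claim that $P'$ ``executes the same sequence of preserved statements in the same order'', and along the way shows that assumption~(2) does little extra work, since the reaching definitions for this run already lie on $\pi$.
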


\begin{proof}
In algorithm~\ref{alg:alg1}, bidirectional BFS on the fine-grained CFG collects all necessary statements that may occur on any execution path including each $l \in U$. 
By Assumption~(2), all statements that define or influence variables used at $l$ appear on such paths and are therefore preserved. 
Statements not in the preserved set never occur on paths including $l$.
So removing them cannot affect variable values at $l$. 
The reconstruction step applies only semantics-preserving structural rewrites, ensuring that preserved statements execute in the same order and with the same effects. 
Thus, for any execution of $P$ that includes an uncovered line $l$, $P'$ executes the same sequence of preserved statements in the same order, resulting in identical valuations of all variables used in $l$.
\end{proof}

In practice, code elimination may not perfectly maintain syntactic correctness or program logic.
Nevertheless, the essential information for test generation remains preserved in the new code \& dependencies slice.
Furthermore, all LLM-generated tests are evaluated on the original source file, and each elimination round is applied to the original target unit, preventing error accumulation.

After completing the process of code elimination, we obtain a smaller temporary slice \& dependencies file and apply it to the LLM-based tests generation without the elimination component.
This cycle repeats until the iteration limit in the LLM-based tests generation without the elimination component is reached, and the coverage rate remains unchanged.
During this iterative process, each execution of the code elimination not only reduces the token count of the code under test, but also enhances the prompt efficiency by narrowing the analysis scope for the LLM.
This focused context allows the LLM to focus on the relevant logic of the uncovered lines, thereby improving its overall performance.

%% file: sections/5_experiments.tex
\section{Experiments}~\label{exp}
In this section, we first present the setup of the experiment and then detail the results and analysis of the experiment.
\subsection{Experiment Setup}

\subsubsection{Benchmarks.}
We construct our dataset following established practices in prior influential work, including both search-based methods~\cite{lukasczyk2022pynguin} and LLM-based approaches~\cite{lemieux2023codamosa, yang2024enhancing}.
Our dataset is built under three criteria.
First, we include projects used in Pynguin~\cite{lukasczyk2022pynguin} and TELPA~\cite{yang2024enhancing}, as they serve as part of our baselines.
Second, we select projects from diverse domains with high GitHub star counts to ensure practical relevance.
Third, each project must contain 1–40 complex methods, defined as methods with cyclomatic complexity above 10 and more than 50 lines of code; these methods constitute our target units.
These criteria ensure that our dataset is both high-quality and representative, while maintaining a balance between diversity and computational budget.
The corresponding statistics are presented in Table.~\ref{tab:benchmarks}.
Each row is an open-source project.
The "Revision" column indicates the specific version of the project used in our evaluation. 
The "Domain" column represents the application area of the open-source project.
"\#MUTs" denotes the number of methods under test in each project.

\renewcommand{\arraystretch}{1.2}
\begin{table*}[t]
\small
\centering
\begin{tabular}{lllrrrrr}
\hline
\textbf{Project} & \textbf{Revision} & \textbf{Domain} & \textbf{\#MUTs} & \multicolumn{3}{c}{\textbf{\# Lines}}        \\ \cline{5-7}
                 &                   &                 &                 & \textbf{min} & \textbf{mean} & \textbf{max} &                   \\ \hline
dataclasses-json & dc63902  & Serialization     &   6    & 53      &    60    &   85    &          \\ 
docstring parser & 4951137  & Text Processing   &   36   & 51      &    93    &   260   &          \\ 
flutes           & 3b7c518  & Utility           &   4    & 50      &    70    &   104   &          \\ 
flutils          & df0f84e  & Utility           &  18    & 50      &    87    &   180   &          \\ 
mimesis          & b981966  & Data Processing   &   6    &    53   &     64   &      60 &          \\ 
sanic            & a64d7fe  & Microservices     &   20    &     52  &      189  &      89 &          \\ 
pytutils         & d7a37c0  & Utility           &   3    &     51  &      67  &    80   &          \\ 
thonny           & e019c1e  & IDE               &   20   &     50  &      144 &    78   &          \\ 
cookiecutter     & b445123  & Utility           &   10   &     52  &      172 &    101  &          \\ \hline
\end{tabular}
\caption{Characteristics of the benchmarks used in the evaluation.}
\label{tab:benchmarks}
\end{table*}

\subsubsection{Baselines.}
We select three state-of-the-art LLM-based unit test generation methods and one SBST approach as our baselines. 
The details are as follows:
\begin{itemize}
    \item \textbf{ChatUniTest (CUT)~\cite{chen2024chatunitest}:}
    This is a widely cited LLM-based automated unit test generation framework that employs a generation-validation-repair mechanism to mitigate errors produced by the LLM.
    \item \textbf{TELPA~\cite{yang2024enhancing}:}
    It is a novel LLM-based test generation method that tackles hard-to-cover branches.
    It refines ineffective tests as counter-examples and integrates program analysis results into the prompt to guide LLMs in generating more effective tests.
    \item \textbf{HITS~\cite{wang2024hits}:} 
        It is a method that enhances LLM-based unit test generation by slicing complex focal methods into smaller parts, allowing the LLM to generate tests incrementally. 
    \item \textbf{Pynguin~\cite{lukasczyk2022pynguin}:}
        It is a widely used search-based software testing tool for Python that leverages evolutionary algorithms to automatically generate unit tests aiming to maximize code coverage.
\end{itemize}

For TELPA, as the original paper does not release the source code, we reimplement the approach based on the descriptions provided.
For ChatUniTest and HITS, which are originally designed for generating unit tests in Java, we develop an adapted version tailored for Python.

\subsubsection{Experimental Parameters.}
We use OpenAI's GPT-4o~\cite{openai2024gpt4o} as the LLM in this work since this model strikes a good balance between the price and the performance in code-related works.
For the GPT-4o API, we use the default values for the temperature and token length limit parameters. 
Specifically, the temperature is set to 1.0, and the token length limit is 8096.
For each LLM-based method, the number of interaction rounds with the LLM in a dialogue is limited to 5.
For Pynguin, we set the max iteration number to 100.

\subsection{Coverage Rate Comparison}
We present and analyze the line coverage achieved when testing complex Python methods, which serves as the core evaluation of our approach. 
Due to space limitations, we report only line coverage, since prior studies~\cite{lemieux2023codamosa, yang2024enhancing, wang2024hits} have shown that line and branch coverage are strongly correlated across unit test generation methods.
Coverage is computed using the widely adopted tool \textit{Coverage}~\cite{coveragepy}.

Table~\ref{tab:line} summarizes the results, where each row corresponds to a project and each column to a test generation method. 
Bold values denote the best performance for each project. 
As shown in the table, \textbf{our method achieves the highest average line coverage among all baselines}, and attains the best result in 7 out of the 9 projects, demonstrating its effectiveness on complex target methods.

Overall, existing LLM-based methods struggle with the complex functions. 
For instance, TELPA achieves only 20.51\% coverage on the complex target function in \textit{docstring parser}, despite reporting 92.20\% coverage for the full project in its original paper. 
This significant drop highlights the scalability limitations of current LLM-based approaches when dealing with highly complex functions.

Moreover, TELPA and HITS do not consistently outperform each other, but both perform better than the basic ChatUniTest.
This suggests that the context enrichment used in TELPA and the lightweight LLM-guided decomposition adopted by HITS can be effective to some extent, but their overall impact remains limited compared to our method.

The SBST method, Pynguin, also performs poorly on complex functions, yielding 0.00\% coverage on six projects. 
This occurs because its evolutionary search often cannot produce valid inputs that satisfy the strict preconditions, deep control flows, or large input domains of these complex methods. 
As a result, the generated tests fail to reach the entry paths of the target functions, leading to no statements being covered.
This reflects the superiority of LLM-based test generation methods.

\renewcommand{\arraystretch}{1.2}
\begin{table}[t]
\centering
\small
\begin{tabular}{@{}lrrrrr@{}}
\hline
\textbf{Project} & \textbf{Our} & \textbf{CUT} & {\textbf{TELPA}} & \textbf{HITS} & \textbf{Pynguin }\\ \hline
dataclasses-json &   24.27\%    &   \textbf{25.12\%}     &   16.50\%        &    15.53\%  & 16.57\% \\
docstring parser &   \textbf{40.71\%}    &   22.64\%   &   20.51\%        &    33.80\%   & 16.49\%\\
flutes           &   \textbf{49.34\%}    &   45.80\%   &   49.34\%        &    48.68\%   & 0.00\% \\
flutils          &   \textbf{41.04\%}    &   16.22\%   &   35.07\%        &    20.90\%   & 15.15\%\\
mimesis          &   \textbf{78.82\%}    &   0.00\%   &   55.29\%        &    64.71\%   &  0.00\%\\
sanic            &   \textbf{38.95\%}    &  38.32\%   &   31.70\%        &   27.76\%   & 0.00\%\\
pytutils         &   34.87\%             &  25.89\%    &  \textbf{41.28\%}&   31.19\%   &  0.00\%\\ 
thonny           &   \textbf{42.62\%}      &  25.79\%    &  19.57\%         &   28.15\%  &  3.62\% \\ 
cookiecutter     &   \textbf{47.28\%}    &  25.00\%    &  25.25\%&   14.36\%   &  0.00\% \\ \hline
Avg.             &   \textbf{42.21\%}    &  24.98\%    &   32.72\%        &   31.67\%   & 5.76\% \\ \hline
\end{tabular}
\caption{Line Coverage Scores on Complex Methods.}
\label{tab:line}
\end{table}


\subsection{Execution Correctness Check}
We compare the execution correctness among the five unit test generation methods.
LLM-based unit test generation methods rely on the reasoning capabilities of LLMs to understand code and generate test cases. 
Compared with traditional search-based approaches, LLMs, with their probabilistic nature, are more likely to generate incorrect test cases.
Specifically, the number of passed test cases is calculated as the total number of test cases minus those that resulted in failures, errors, or were skipped.
The results of the pass rates are shown in Table~\ref{tab:pass}.

According to Table~\ref{tab:pass}, \textbf{the pass rate of our approach is not significantly higher than that of other methods.}
We also observe that a higher pass rate does not necessarily indicate a higher coverage rate.
This may be because, during the iterative generation process, the LLM tends to produce many similar unit test cases targeting the same code lines, resulting in limited additional coverage despite a high pass rate.
For example, the pass rate of our method on the project \textit{flutils} is 63.83\%, much higher than it is on project \textit{mimesis}, which is 24.45\%, while the line coverage rate is opposite.

We also observe that when Pynguin successfully generates executable test cases, its pass rate is generally higher than that of LLM-based test generation methods. 
This highlights a limitation of LLM-based approaches in this aspect and indicates an important direction for future improvement.


\renewcommand{\arraystretch}{1.2}
\begin{table}[t]
\centering
\small
\begin{tabular}{@{}lrrrrr@{}}
\hline
\textbf{Project} & \textbf{Our} & \textbf{CUT} & {\textbf{TELPA}} & \textbf{HITS} & \textbf{Pynguin}\\ \hline
dataclasses-json &   63.31\%    &     \textbf{64.28\%}     &   63.83\%     &    46.00\%  & 77.78\%  \\
docstring parser &   29.06\%    &      \textbf{40.00\%}   &   20.63\%        &    34.57\%  & 83.33\%  \\
flutes           &   \textbf{38.75\%}    &   21.15\%   &   27.91\%        &    24.44\% & N\//A  \\
flutils          &   \textbf{63.83\%}    &   38.72\%   &   63.32\%        &    46.00\% &  50.00\%  \\
mimesis          &   24.45\%    &   0.00\%   &   \textbf{42.85\%}       &    37.33\%   & N\//A \\
sanic            &   10.86\%    &  \textbf{35.58\%}   &   15.15\%        &   20.42\%  & N\//A  \\
pytutils         &   47.36\%    &   25.32\%        &    \textbf{51.85\%}      &    28.57\%   & N\//A  \\ 
thonny           &   \textbf{32.60\%}    &  19.27\%    &  11.94\%&   16.35\%   & 50.00\%  \\ 
cookiecutter     &   \textbf{29.00\%}    &  23.07\%    &  13.07\%&   17.12\%   & N\//A  \\  \hline
Avg.             &   \textbf{37.69\%}    &  29.71\%    &  34.50\%&   30.08\%   & 65.29\% \\ \hline
\end{tabular}
\caption{Pass Rate Comparison on Complex Methods.}
\label{tab:pass}
\vspace{-2mm}
\end{table}

\subsection{Ablation Study}
We analyze the components’ contribution to improving coverage scores through the ablation study.
Compared to directly using LLMs to generate unit test cases for the software under test, our approach incorporates several key techniques: (1) context information retrieval, which extracts project dependencies and summarizes them using the LLM; (2) code elimination, which removes code that has already been covered; and (3) iteration generation without elimination, in which the LLM is called multiple times to generate unit tests for a code slice, refining them based on coverage reports until an iteration limit is reached. 
We perform ablation studies to investigate the contribution of each component within our method.
The results are shown in Table.~\ref{tab:ablation}.

\renewcommand{\arraystretch}{1.2}
\begin{table}[!htbp]
\centering
\small
\begin{tabular}{lrrrrr}
\hline
\textbf{Project} & \textbf{Our} & \textbf{w/o E} & {\textbf{w/o I}} & \textbf{w/o D} \\ \hline
dataclasses-json &  \textbf{24.27\%}       &   14.56\%      &  16.99\%    &     20.54\%   \\
docstring parser &  \textbf{40.71\%}       &   25.33\%      &  16.54\%    &     36.54\%   \\
flutils          &  \textbf{41.04\%}       &   29.14\%      &  11.42\%    &     38.05\%   \\
mimesis          &  \textbf{78.82\%}      &    54.55\%      &   0.00\%    &     75.29\%  \\
cookiecutter     &  \textbf{42.62\%}      &    28.44\%      &  36.13\%    &     37.62\%   \\ \hline
Avg.             &  \textbf{45.49\%}      &    30.40\%      &  16.22\%    &     41.61\%   \\ \hline
\end{tabular}
\caption{Ablation Study. \textbf{w/o E} indicates without code elimination. \textbf{w/o I} means without iteration interact with LLMs. \textbf{w/o D} represents without dependency information.}
\label{tab:ablation}
\vspace{-2mm}
\end{table}


We compare three variants of our method: (1) ``w/o Elimination", which removes the elimination module from the system; (2) ``w/o Iteration", which sets the iteration limit in the LLM-based test generation without elimination module to 1; and (3) ``w/o Dependencies", which directly inputs the code slices without dependencies.

As shown in the table, the decrease in coverage score caused by removing the context information retrieval step (w/o Dependencies) is the smallest.
This may be because related code within the same file as the target unit is still included in the code slice, whereas the influence of external packages is relatively limited. 
Since these projects are high-quality open-source software, the LLM can often infer the usage of referenced functions and classes from detailed comments and naming conventions.

Both code elimination and iteration number have a significant impact on the system’s performance. 
However, removing iteration leads to an even larger decrease in coverage score. 
This is because the output of LLMs is highly unstable; without iteration, if the LLM fails to generate test cases that cover new lines, the method with code elimination will terminate immediately, rendering the elimination module ineffective.
\textbf{All steps and components in our method contribute to the performance improvements in the average coverage scores}. 


%% file: sections/7_conclusion.tex
\section{Conclusion}~\label{con}
We highlight the limitations of LLM-based unit test generation methods in achieving high coverage when applied to complex methods under test.
We follow the divide-and-conquer paradigm and propose a method to enhance the LLM-based unit test generation by iteratively eliminating covered code through static analysis.
We conducted a comprehensive evaluation of our method against state-of-the-art LLM-based and search-based methods, demonstrating the effectiveness of our approach.

%% file: sections/8_appendix.tex
\appendix
\section{An Illustrative Example}
Here we provide an illustrative example of code elimination process, as shown in Fig~\ref{fig:example}.
\begin{figure*}[htbp]
    \centering
    \includegraphics[width=\textwidth]{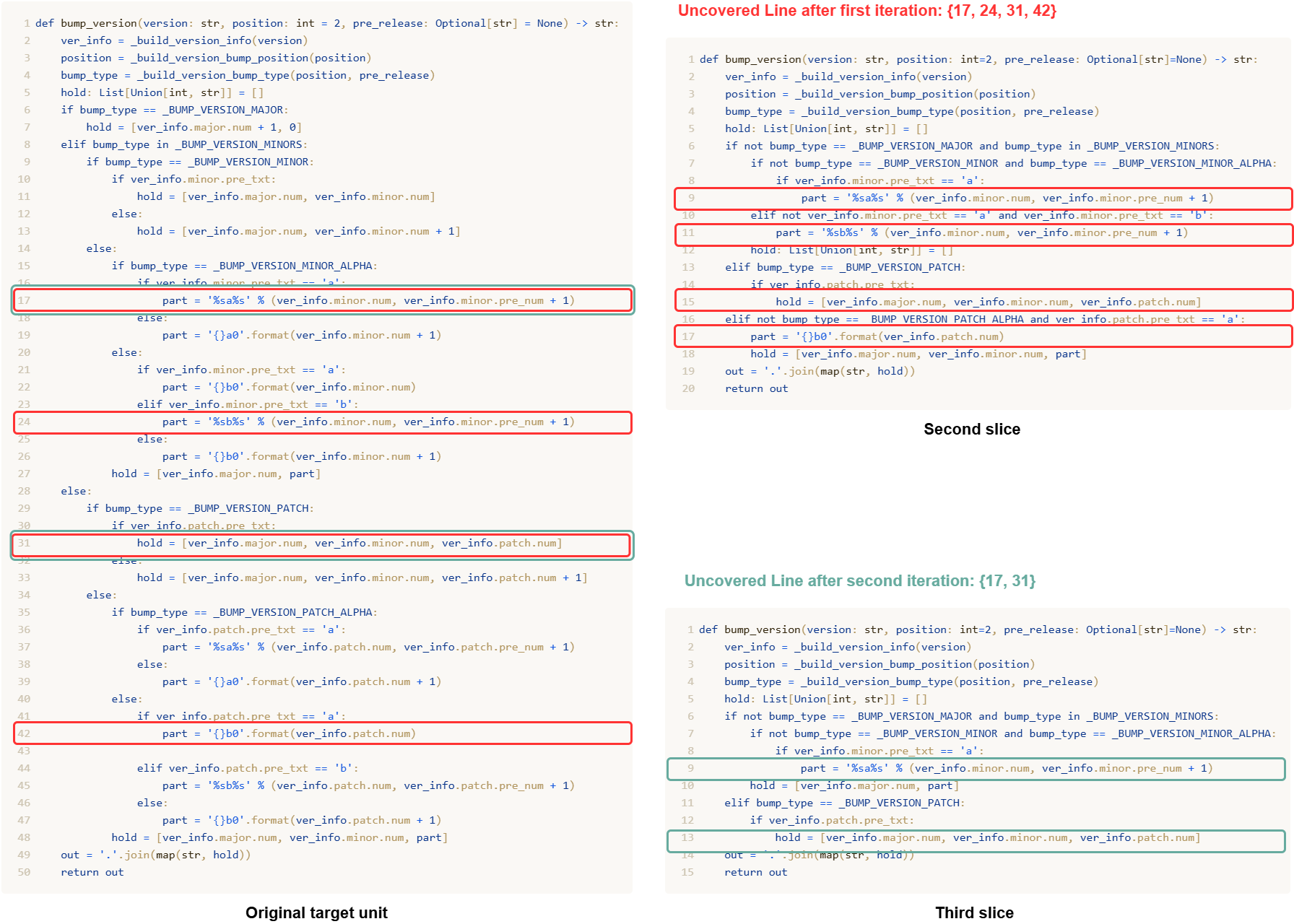}
    \caption{An Example of Eliminating Covered Code.}
    \label{fig:example}
\end{figure*}

On the left is the original target unit, which is a complex function \textit{bump\_version($\cdot$)} from an open-source Python project \textit{flutils}.
As can be seen, this target unit contains 50 lines of code and includes multiple layers of nested conditional statements, making it challenging for the LLM to achieve full coverage in a single attempt.
The code lines highlighted with red boxes, which are Lines 17, 24, 31, and 42, represent the uncovered lines identified after the unit test generation without elimination.
Based on these uncovered lines, we apply Algorithm~\ref{alg:alg1} to the original target code to perform code elimination.
This process yields the refined second code slice shown in the upper-right corner.
As shown, we preserve all predecessor code lines of the uncovered lines.
For example, Lines 2, 3, 4, and 5.
This ensures the execution logic for the uncovered line remains unchanged.
Additionally, we address structural issues that may arise when branches under an \textit{if} statement are eliminated, which could otherwise invalidate the corresponding \textit{else} or \textit{elif} blocks.
For instance, in the second slice, Lines 6 and 7 are changed to maintain syntactic and semantic correctness.
At this point, the code has been reduced to 20 lines, with all unnecessary statements removed.

After unit test generation without elimination for the second slice, a new set of uncovered lines is identified.
Specifically, they are Lines 17 and 31 in the original target unit.
Our method then applies the code elimination process to the target unit again based on these newly uncovered lines.
This results in a more compact third slice, consisting of 15 lines, as shown in the lower-right corner.

This example demonstrates that our approach effectively decomposes the problem of generating unit test cases for a complex target function into generating tests for a series of smaller code slices.
By doing so, it addresses the scalability issue faced by LLM-based unit test generation methods in a principled and efficient manner.

\section{Experiment Details}
\subsection{Baseline Implementation}
For TELPA~\cite{yang2024enhancing}, we re-implement the approach based on the description provided, as the original paper does not release the source code.
This baseline adopts an architecture similar to the LLM-based test generation without elimination in our method. The key difference lies in its iterative process: rather than terminating upon detecting newly covered lines after test validation, it continues the generation process until either full coverage is achieved or a predefined iteration limit is reached.
The iteration limit for refinement is set to 5.
For prompts used in refinement, we construct them using the backwards analysis approach proposed in the original work.
For ChatUniTest~\cite{chen2024chatunitest}, the original code is designed for Java, and we implement our own version.
We first prompt the LLM to generate unit tests using the same initial prompt as in our method, and then iteratively invoke the LLM to perform bug fixes on the generated tests. 
The maximum number of bug-fixing iterations is set to five.
For HITS~\cite{wang2024hits}, the original code is also designed for Java, and we implement our own version.
We first leverage the LLM to segment the target unit into several segments and summarize the functionality of each segment.
Then, the LLM is prompted to generate unit tests for each segment individually.
Finally, the LLm is used to perform bug fixing similar to ChatUniTest, with the number of iterations limited to five.

\subsection{Experiment Setup}
We use OpenAI's GPT-4o~\cite{openai2024gpt4o} as the large language model in this work since this model strikes a good balance between the price and the performance in code-related works.
For the GPT-4o API, we use the default values for the temperature and token length limit parameters. 
Specifically, the temperature is set to 1.0, and the token length limit is 8096.
For each LLM-based method, the number of interaction rounds with LLM in a dialogue is limited to 5.
It is worth noting that if the accumulated multi-turn dialogue context causes the prompt to exceed the token limit during an LLM invocation, we treat this invocation as a failure and terminate the loop.
This applies to both our method and TELPA's multi-turn refinement of generated unit tests, as well as ChatUniTest and HITS's multi-turn bug-fixing processes.
During the experiments, we select target methods or functions from nine open-source projects based on the criterion of having more than 50 lines of code.
We also collect dependency information for each target unit. 
All relevant information for each target unit is stored in a corresponding JSON file.
The datasets will be released upon publication.
During evaluation, each method is tested using its JSON file directory as input to generate and assess the corresponding unit test. 
It is important to note that, since different baselines originally adopt their own strategies for incorporating dependency information into the initial prompt, we standardize this step by using the dependency collection method proposed in this paper.

\section{Discussion}
\subsection{Future Improvement}
The most straightforward enhancement is to use a more powerful LLM. 
In our experiments, we selected the widely used GPT-4o for its balance between performance and cost-effectiveness. 
However, it is easy to substitute this model with more advanced models that are specifically optimized for code-related tasks, such as DeepSeek-Coder~\cite{deepseek-coder}, GPT-o3~\cite{openai2024gpto3}.
These models can be seamlessly integrated into our framework to potentially further improve unit test generation quality.

In our implementation, we deliberately avoided extensive prompt engineering. 
The initial prompt consists of the code and its dependency file, accompanied by a few simple examples as guidance. 
During the iterative refinement process, we simply use uncovered lines or run-time errors as prompts to instruct the LLM to further improve the generated test cases.
In fact, users can easily incorporate their own prompt engineering strategies into our method.
For example, prompts constructed through static analysis as used in TELPA can be integrated to further guide the LLM in addressing uncovered paths.
Meanwhile, other techniques such as Retrieval-Augmented Generation (RAG)~\cite{gao2023retrieval} can also be integrated into the system's iterative test generation module—specifically the version without code elimination—to further enhance its effectiveness.

\subsection{Threats to Validity}
The main threats to the validity of our method lie in the limited dataset and its generalizability to other LLMs.
Due to budget constraints, our experiments were conducted on complex methods from only nine projects. 
Moreover, the inherent variability in LLM performance may affect the magnitude of the differences in coverage rates among the unit tests generated by different methods in our experiments.
Nevertheless, the results sufficiently demonstrate the superiority of our approach in addressing scalability challenges. 
However, for real-world deployment, larger-scale testing would be necessary to uncover potential issues within our codebase. 
In our experiments, we only select the widely used GPT-4o model, which offers a favorable trade-off between generation capability and API cost. 
Expanding the dataset and evaluating additional LLMs remain important directions for future work.

Meanwhile, there are still areas for improvement in our implementation. 
For instance, although we explicitly instruct the LLM via prompts to place executable unit test code within the \verb|<answer>...</answer>| structure, the inherent instability of LLM outputs often leads to unexpected formatting issues—such as the inclusion of illegal string patterns like triple backticks (```), which can disrupt code execution. 
Additionally, various path dependency issues may cause the generated tests to fail at run-time. 
These factors can affect the stability of the resulting coverage rates. 
In future work, we plan to further refine the process of extracting test cases from LLM outputs to improve robustness and reliability.